\newcommand{\nwse}[3]{\ensuremath{#1^{#2}_{\phantom{#2} #3}}}
\newcommand{\swne}[3]{\ensuremath{#1_{#2}^{\phantom{#2} #3}}}
\newcommand{\Tr}{\ensuremath{\text{Tr}}}
\newtheorem{theorem}{Theorem}[section]
\newcommand{\ucsc}{Departamento de Matemática y Física Aplicadas, Universidad Católica de la Santísima Concepción, Alonso de Ribera 2850, 4090541 Concepción, Chile}
\newcommand{\udec}{Departamento de F\'{\i}sica, Universidad de Concepci\'{o}n, Casilla 160-C, Concepci\'{o}n, Chile}
\newcommand{\torino}{Dipartimento di Fisica, Politecnico di Torino, C.so Duca degli Abruzzi, 24, I-10129 Torino, Italy}
\newcommand{\unap}{Departamento de F\'{\i}sica y Matem\'{a}ticas, Universidad Arturo Prat, Casilla 121, Iquique, Chile}
\begin{document}

\title{A generalized action for $\left( 2 + 1 \right)$-dimensional Chern--Simons gravity}

\author{Jos\'{e} D\'{\i}az}
\email{jose.diaz.polanco@unap.cl}
\affiliation{\unap}

\author{Octavio Fierro}
\email{ofierro@udec.cl}
\affiliation{\udec}

\author{Fernando Izaurieta}
\email{fizaurie@ucsc.cl}
\affiliation{\ucsc}

\author{Nelson Merino}
\email{nemerino@udec.cl}
\affiliation{\udec}
\affiliation{\torino}

\author{Eduardo Rodr\'{\i}guez}
\email{edurodriguez@ucsc.cl}
\affiliation{\ucsc}

\author{Patricio Salgado}
\email{pasalgad@udec.cl}
\affiliation{\udec}

\author{Omar Valdivia}
\email{ovaldivi@udec.cl}
\affiliation{\unap}
\affiliation{\udec}

\date{June 7, 2012}

\begin{abstract}
We show that the so-called semi-simple extended Poincar\'{e} (SSEP) algebra in $D$
dimensions can be obtained from the anti-de~Sitter algebra
$\mathfrak{so} \left( D-1,2 \right)$
by means of the $S$-expansion procedure with an appropriate semigroup $S$.
A general prescription is given for computing Casimir operators for $S$-expanded algebras,
and the method is exemplified for the SSEP algebra.
The $S$-expansion method also allows us to extract the
corresponding invariant tensor for the SSEP algebra,
which is a key ingredient in the construction of a generalized
action for Chern--Simons gravity in $2+1$ dimensions.
\end{abstract}

\maketitle

\section{Introduction}

In Refs.~\cite{Sor04,Dup05,Sor06,Sor10}, the Poincar\'{e} algebra of rotations $\bm{J}_{ab}$ and
translations $\bm{P}_{a}$ in $D$-dimensional spacetime has been extended by the
inclusion of the second-rank tensor generator $\bm{Z}_{ab}$ in the following way: 
\begin{align}
\left[ \bm{J}_{ab}, \bm{J}_{cd} \right] & = \eta_{ad} \bm{J}_{bc} + \eta_{bc} \bm{J}_{ad} -
\eta_{ac} \bm{J}_{bd} - \eta_{bd} \bm{J}_{ac},  \label{eext1} \\
\left[ \bm{J}_{ab}, \bm{P}_{c} \right] & = \eta_{bc} \bm{P}_{a} - \eta_{ac} \bm{P}_{b}, \\
\left[ \bm{P}_{a}, \bm{P}_{b} \right] & = c \bm{Z}_{ab}, \\
\left[ \bm{J}_{ab}, \bm{Z}_{cd} \right] & = \eta_{ad} \bm{Z}_{bc} + \eta_{bc} \bm{Z}_{ad} -
\eta_{ac} \bm{Z}_{bd} - \eta_{bd} \bm{Z}_{ac}, \\
\left[ \bm{Z}_{ab}, \bm{P}_{c} \right] & = \frac{4a^{2}}{c} \left( \eta_{bc} \bm{P}_{a} -
\eta_{ac} \bm{P}_{b} \right), \\
\left[ \bm{Z}_{ab}, \bm{Z}_{cd} \right] & = \frac{4a^{2}}{c} \left[ \eta_{ad} \bm{Z}_{bc} +
\eta_{bc} \bm{Z}_{ad} - \eta_{ac} \bm{Z}_{bd} - \eta_{bd} \bm{Z}_{ac} \right],  \label{2.1}
\end{align}
where $a$ and $c$ are constants.
It is remarkable that the Lie algebra~(\ref{eext1})--(\ref{2.1}) is semi-simple,
in contrast to the Poincar\'{e} and extended Poincar\'{e} algebras [cf.~eqs.~(1.1) and~(1.2) of Ref.~\cite{Sor06}].
Note that, in the $a \rightarrow 0$ limit, the algebra~(\ref{eext1})--(\ref{2.1})
reduces to the algebra in eq.~(1.2) of Ref.~\cite{Sor06}.
The \emph{semi-simple} extended Poincar\'{e} (SSEP) algebra~(\ref{eext1})--(\ref{2.1}) can be
rewritten in the form 
\begin{align}
\left[ \bm{N}_{ab}, \bm{N}_{cd} \right] & = \eta_{ad} \bm{N}_{bc} + \eta_{bc} \bm{N}_{ad} - \eta
_{ac}\bm{N}_{bd} - \eta_{bd} \bm{N}_{ac},  \label{2.7} \\
\left[ \bm{L}_{AB}, \bm{L}_{CD} \right] & = \eta_{AD} \bm{N}_{BC} + \eta_{BC} \bm{N}_{AD} -
\eta_{AC} \bm{N}_{BD} - \eta_{BD} \bm{N}_{AC},  \label{2.8} \\
\left[ \bm{N}_{ab}, \bm{L}_{CD} \right] & = 0,  \label{2.9}
\end{align}
where the metric tensor $\eta_{AB}$ is given by 
\begin{equation}
\eta_{AB} = \left[ 
\begin{array}{cc}
\eta_{ab} & 0 \\ 
0 & -1%
\end{array}
\right]  \label{2.10}
\end{equation}
and the $\bm{N}_{ab}$ generators read 
\begin{equation}
\bm{N}_{ab} = \bm{J}_{ab} - \frac{c}{4a^{2}} \bm{Z}_{ab}.  \label{2.11}
\end{equation}
The $\bm{N}_{ab}$ generators form a basis for the Lorentz algebra $\mathfrak{so} \left( D-1,1 \right)$.
The $\bm{L}_{AB}$ generators, on the other hand, are given by 
\begin{equation}
\bm{L}_{AB} = \left[ 
\begin{array}{cc}
\bm{L}_{ab} & \bm{L}_{a,D} \\ 
\bm{L}_{D,a} & \bm{L}_{D,D}%
\end{array}
\right] = \left[ 
\begin{array}{cc}
\frac{c}{4a^{2}} \bm{Z}_{ab} & \frac{1}{2a} \bm{P}_{a} \\ 
-\frac{1}{2a} \bm{P}_{a} & 0%
\end{array}
\right]  \label{2.12}
\end{equation}
and form a basis for the anti-de~Sitter (AdS) $\mathfrak{so} \left( D-1,2 \right)$ algebra.
The SSEP algebra (\ref{2.7})--(\ref{2.9}) is thus seen to be the direct sum
$\mathfrak{so} \left( D-1,1 \right) \oplus \mathfrak{so} \left( D-1,2 \right)$
of the $D$-dimensional Lorentz algebra and the $D$-dimensional AdS algebra.

Using (\ref{2.11}) and (\ref{2.12}) in (\ref{2.7})--(\ref{2.9}) we find that
the SSEP algebra~(\ref{eext1})--(\ref{2.1}) can be rewritten as
\begin{align}
\left[ \bm{N}_{ab},\bm{N}_{cd}\right] & =\eta _{ad}\bm{N}_{bc}+\eta _{bc}\bm{N}_{ad}-\eta
_{ac}\bm{N}_{bd}-\eta _{bd}\bm{N}_{ac},  \label{2.12'} \\
\left[ \bm{L}_{ab},\bm{L}_{cd}\right] & =\eta _{ad}\bm{L}_{bc}+\eta _{bc}\bm{L}_{ad}-\eta
_{ac}\bm{L}_{bd}-\eta _{bd}\bm{L}_{ac}, \\
\left[ \bm{L}_{ab},\bm{L}_{c,D}\right] & =\eta _{bc}\bm{L}_{a,D}-\eta _{ac}\bm{L}_{b,D}, \\
\left[ \bm{L}_{a,D},\bm{L}_{c,D}\right] & =\bm{L}_{ac}, \\
\left[ \bm{N}_{ab},\bm{L}_{cd}\right] & =0, \\
\left[ \bm{N}_{ab},\bm{L}_{c,D}\right] & =0.  \label{2.13}
\end{align}

It is the purpose of this Letter to show that the SSEP algebra
$\mathfrak{so} \left( D-1,1 \right) \oplus \mathfrak{so} \left( D-1,2 \right)$
can be obtained from the AdS algebra $\mathfrak{so} \left( D-1,2 \right)$ via the $S$-expansion procedure with an
appropriate semigroup $S$ \cite{Iza06b,Iza09a}.
The $S$-expansion method also allows
us to compute an invariant tensor for the SSEP
algebra, which is a key ingredient in the construction of the more general
action for Chern--Simons (CS) gravity in $2+1$ dimensions.

The article is organized as follows.
In section~\ref{sec:sexpp} we briefly review the main aspects of the $S$-expansion procedure.
In section~\ref{sec:sexpAdS} we cast the SSEP algebra as an $S$-expansion
of the $D$-dimensional AdS algebra
$\mathfrak{so} \left( D-1,2 \right)$ through an appropriate semigroup $S$.
Section~\ref{sec:Casgral} is devoted to a systematic exposition of the construction of Casimir operators for $S$-expanded Lie algebras.
This general procedure is applied in section~\ref{sec:Caspart} to the case of the SSEP algebra.
In section~\ref{sec:CS} we compute an invariant tensor for the SSEP algebra and put it to use by
constructing the more general action for CS gravity in $2+1$ dimensions.
A brief comment in section~\ref{sec:final} concludes the paper.

\section{The $S$-Expansion Procedure}
\label{sec:sexpp}

In this section we briefly review the general abelian semigroup expansion procedure ($S$-expansion for short).
We refer the interested reader to Ref.~\citep{Iza06b} for further details.

Consider a Lie algebra $\mathfrak{g}$ and a finite abelian
semigroup $S = \left\{ \lambda_{\alpha} \right\}$. According to Theorem~3.1
from Ref.~\cite{Iza06b}, the direct product $S \times \mathfrak{g}$ is also a Lie
algebra. Interestingly, there are cases when it is possible to
systematically extract subalgebras from $S \times \mathfrak{g}$. Start by
decomposing $\mathfrak{g}$ in a direct sum of subspaces, as in $\mathfrak{g}
= \bigoplus_{p\in I}V_{p}$, where $I$ is a set of indices. The internal
structure of $\mathfrak{g}$ can be codified through the mapping%
\footnote{Here $2^{I}$ stands for the set of all subsets of $I$.}
$i : I \times I \rightarrow 2^{I}$ according to
$\left[ V_{p}, V_{q} \right] \subset \bigoplus_{r \in i \left( p,q \right)} V_{r}$.
When the semigroup $S$
can be decomposed in subsets $S_{p}$, $S = \bigcup_{p \in I} S_{p}$, such
that they satisfy the ``resonant condition'' $S_{p} \cdot S_{q}
\subset \bigcap_{r \in i \left( p,q \right)} S_{r}$,%
\footnote{Here $S_{p} \cdot S_{q}$ denotes the set of all the products of all elements from $S_{p}$ with all elements from $S_{q}$.}
then we have that
$\mathfrak{G}_{\text{R}}=\bigoplus_{p\in I}S_{p}\times V_{p}$ is a
``resonant subalgebra'' of $S \times \mathfrak{g}$ (see Theorem~4.2 from
Ref.~\cite{Iza06b}).

An even smaller algebra can be obtained when there is a zero element in the
semigroup, i.e., an element $0_{S}\in S$ such that, for all $\lambda_{\alpha
}\in S$, $0_{S}\lambda_{\alpha}=0_{S}$. When this is the case, the whole
$0_{S}\times\mathfrak{g}$ sector can be removed from the resonant subalgebra
by imposing $0_{S}\times\mathfrak{g}=0$. The remaining piece, to which we
refer to as ``$0_{S}$-reduced algebra,'' continues to be a Lie algebra
(see $0_{S}$-reduction and Theorem~6.1 from Ref.~\cite{Iza06b}).

\section{$S$-Expansion of the anti-de~Sitter Algebra}
\label{sec:sexpAdS}

In this section we sketch the steps to be undertaken in order to obtain the SSEP algebra,
$\mathfrak{so} \left( D-1,1 \right) \oplus \mathfrak{so} \left( D-1,2 \right)$,
as an $S$-expansion of the AdS algebra,
$\mathfrak{so} \left( D-1,2 \right)$.

The first step consists of splitting the AdS algebra in subspaces, i.e.,
$\mathfrak{so}\left( D-1,2\right) =V_{0}\oplus V_{1}$,
where $V_{0}$ corresponds to the Lorentz subalgebra
$\mathfrak{so}\left(D-1,1\right)$, which is generated by $\bar{\bm{J}}_{ab}$,
and $V_{1}$ corresponds to the AdS ``boosts,'' generated by $\bar{\bm{P}}_{a}$.
The generators $\bar{\bm{J}}_{ab}$, $\bar{\bm{P}}_{a}$ satisfy the following commutation relations:
\begin{align}
\left[ \bar{\bm{P}}_{a},\bar{\bm{P}}_{b}\right] &=\bar{\bm{J}}_{ab} \\
\left[ \bar{\bm{J}}_{ab},\bar{\bm{P}}_{c}\right] &=\eta _{cb}\bar{\bm{P}}_{a}-\eta _{ca}%
\bar{\bm{P}}_{b} \\
\left[ \bar{\bm{J}}_{ab},\bar{\bm{J}}_{cd}\right] &=\eta _{ad}\bar{\bm{J}}_{bc}+\eta _{bc}%
\bar{\bm{J}}_{ad}-\eta _{ac}\bar{\bm{J}}_{bd}-\eta _{bd}\bar{\bm{J}}_{ac}.
\end{align}
The subspace structure can be written as 
\begin{align}
\left[ V_{0},V_{0}\right] & \subset V_{0}, \label{V000} \\
\left[ V_{0},V_{1}\right] & \subset V_{1}, \\
\left[ V_{1},V_{1}\right] & \subset V_{0}.  \label{dos14}
\end{align}

The second step consists of finding an abelian semigroup $S$ which can be
partitioned in a ``resonant'' way with
respect to eqs.~(\ref{V000})--(\ref{dos14}).
We shall consider the expansion procedure using two different semigroups.

\subsection{Semigroup $S_{\text{S3}}$}
\label{sec:S3}

Let us consider first the semigroup
$S_{\text{S3}} = \left\{
\bar{\lambda}_{0},
\bar{\lambda}_{1},
\bar{\lambda}_{2},
\bar{\lambda}_{3}
\right\}$
defined by the following multiplication table: 
\begin{equation}
\begin{tabular}{c|cccc}
                    & $\bar{\lambda}_{0}$ & $\bar{\lambda}_{1}$ & $\bar{\lambda}_{2}$ & $\bar{\lambda}_{3}$ \\
\hline
$\bar{\lambda}_{0}$ & $\bar{\lambda}_{2}$ & $\bar{\lambda}_{3}$ & $\bar{\lambda}_{0}$ & $\bar{\lambda}_{3}$ \\ 
$\bar{\lambda}_{1}$ & $\bar{\lambda}_{3}$ & $\bar{\lambda}_{1}$ & $\bar{\lambda}_{3}$ & $\bar{\lambda}_{3}$ \\ 
$\bar{\lambda}_{2}$ & $\bar{\lambda}_{0}$ & $\bar{\lambda}_{3}$ & $\bar{\lambda}_{2}$ & $\bar{\lambda}_{3}$ \\ 
$\bar{\lambda}_{3}$ & $\bar{\lambda}_{3}$ & $\bar{\lambda}_{3}$ & $\bar{\lambda}_{3}$ & $\bar{\lambda}_{3}$
\end{tabular}
\label{dos15}
\end{equation}
A straightforward but important observation is that, for each $\lambda_{\alpha }\in S$, we have that $\bar{\lambda}_{3}\bar{\lambda}_{\alpha }=\bar{\lambda}_{3}$,
so that $\bar{\lambda}_{3}$ is seen to play the r\^{o}le of the zero element inside $S$.

Consider now the partition $S=S_{0}\cup S_{1}$, with
\begin{align}
S_{0}& =\left\{ \bar{\lambda}_{1},\bar{\lambda}_{2},\bar{\lambda}%
_{3}\right\} ,  \label{dos10} \\
S_{1}& =\left\{ \bar{\lambda}_{0},\bar{\lambda}_{3}\right\}.
\label{dos11}
\end{align}%
This partition is said to be resonant, since it satisfies
[cf.~eqs.~(\ref{V000})--(\ref{dos14})]
\begin{align}
S_{0} \cdot S_{0} & \subset S_{0}, \\
S_{0} \cdot S_{1} & \subset S_{1}, \\
S_{1} \cdot S_{1} & \subset S_{0}. \label{dos12}
\end{align}

Theorem~4.2 from Ref.~\cite{Iza06b} now assures us that 
\begin{equation}
\mathfrak{G}_{\text{R}}=W_{0}\oplus W_{1}  \label{dos13}
\end{equation}%
is a \emph{resonant subalgebra} of $S_{\text{S3}}\times \mathfrak{g,}$where%
\begin{align}
W_{0} & =
S_{0} \times V_{0} =
\left\{
  \bar{\lambda}_{1}, \bar{\lambda}_{2},\bar{\lambda}_{3}
\right\}
\otimes
\left\{ \bar{\bm{J}}_{ab} \right\} =
\left\{
  \bar{\lambda}_{1} \bar{\bm{J}}_{ab},
  \bar{\lambda}_{2} \bar{\bm{J}}_{ab},
  \bar{\lambda}_{3} \bar{\bm{J}}_{ab}
\right\},
\label{2.14} \\
W_{1} & =
S_{1} \times V_{1} =
\left\{
  \bar{\lambda}_{0},
  \bar{\lambda}_{3}
\right\}
\otimes
\left\{ \bar{\bm{P}}_{a} \right\} =
\left\{
  \bar{\lambda}_{0} \bar{\bm{P}}_{a},
  \bar{\lambda}_{3} \bar{\bm{P}}_{a}
\right\}.
\label{2.15}
\end{align}

As a last step, impose the condition $\lambda _{3}\times \mathfrak{g}=0$ on
$\mathfrak{G}_{\text{R}}$ and relabel its generators as
$\bm{J}_{ab,1} = \bar{\lambda}_{1} \bar{\bm{J}}_{ab}$,
$\bm{J}_{ab,2} = \bar{\lambda}_{2} \bar{\bm{J}}_{ab}$, and
$\bm{P}_{a,0}  = \bar{\lambda}_{0} \bar{\bm{P}}_{a}$.
This procedure leads us to the following commutation relations:
\begin{align}
\left[ \bm{J}_{ab,1}, \bm{J}_{cd,1}\right] & =
\bar{\lambda}_{1} \bar{\lambda}_{1}
\left[ \bar{\bm{J}}_{ab}, \bar{\bm{J}}_{cd} \right]
\nonumber \\ & =
\bar{\lambda}_{1}
\left[ \bar{\bm{J}}_{ab}, \bar{\bm{J}}_{cd}\right]
\nonumber \\ & =
\eta_{ad} \bm{J}_{bc,1} + \eta_{bc} \bm{J}_{ad,1} - \eta_{ac} \bm{J}_{bd,1} - \eta_{bd} \bm{J}_{ac,1},
\label{2.16a}
\end{align}
\begin{align}
\left[ \bm{J}_{ab,2}, \bm{J}_{cd,2} \right] & =
\bar{\lambda}_{2} \bar{\lambda}_{2}
\left[ \bar{\bm{J}}_{ab}, \bar{\bm{J}}_{cd} \right]
\nonumber \\ & =
\bar{\lambda}_{2}
\left[ \bar{\bm{J}}_{ab}, \bar{\bm{J}}_{cd} \right]
\nonumber \\ & =
\eta_{ad} \bm{J}_{bc,2} + \eta_{bc} \bm{J}_{ad,2} - \eta_{ac} \bm{J}_{bd,2} - \eta_{bd} \bm{J}_{ac,2},
\label{2.16b}
\end{align}
\begin{align}
\left[ \bm{J}_{ab,1}, \bm{J}_{cd,2} \right] & =
\bar{\lambda}_{1} \bar{\lambda}_{2}
\left[ \bar{\bm{J}}_{ab}, \bar{\bm{J}}_{cd} \right]
\nonumber \\ & =
\bar{\lambda}_{3}
\left[ \bar{\bm{J}}_{ab}, \bar{\bm{J}}_{cd} \right]
\nonumber \\ & = 0,
\label{2.16c}
\end{align}
\begin{align}
\left[ \bm{J}_{ab,1}, \bm{P}_{c,0} \right] & =
\bar{\lambda}_{1} \bar{\lambda}_{0}
\left[ \bar{\bm{J}}_{ab}, \bar{\bm{P}}_{c} \right]
\nonumber \\ & =
\bar{\lambda}_{3}
\left[ \bar{\bm{J}}_{ab}, \bar{\bm{P}}_{c}\right]
\nonumber \\ & = 0,
\label{2.16d}
\end{align}
\begin{align}
\left[ \bm{J}_{ab,2}, \bm{P}_{c,0} \right] & =
\bar{\lambda}_{2} \bar{\lambda}_{0}
\left[ \bar{\bm{J}}_{ab}, \bar{\bm{P}}_{c} \right]
\nonumber \\ & =
\bar{\lambda}_{0} \left[ \bar{\bm{J}}_{ab}, \bar{\bm{P}}_{c} \right]
\nonumber \\ & =
\eta _{bc} \bm{P}_{a,0} - \eta _{ac} \bm{P}_{b,0}
\label{2.16e}
\end{align}
\begin{align}
\left[ \bm{P}_{a,0}, \bm{P}_{b,0} \right] & =
\bar{\lambda}_{0} \bar{\lambda}_{0}
\left[ \bar{\bm{P}}_{a}, \bar{\bm{P}}_{b} \right]
\nonumber \\ & =
\bar{\lambda}_{2}
\left[ \bar{\bm{P}}_{a}, \bar{\bm{P}}_{b} \right]
\nonumber \\ & =
\bm{J}_{ab,2},
\label{2.16f}
\end{align}
where we have used the commutation relations of the AdS algebra
and the multiplication law~(\ref{dos15}) of the semigroup $S_{S3}$.

The identification
$\bm{N}_{ab} = \bm{J}_{ab,1}$,
$\bm{L}_{ab} = \bm{J}_{ab,2}$, and
$\bm{L}_{a,D} = \bm{P}_{a,0}$
shows that the algebra~(\ref{2.16a})--(\ref{2.16f}), obtained by $S_{S3}$-expansion and $0_{S}$-reduction of
the AdS algebra $\mathfrak{so} \left( D-1,2\right)$, coincides with the SSEP algebra~(\ref{2.12'})--(\ref{2.13})
%$\mathfrak{so} \left( D-1,1 \right) \oplus \mathfrak{so} \left( D-1,2 \right)$
obtained by semisimple extension of the Poincar\'{e} algebra in Refs.~\cite{Sor04,Dup05,Sor06}.

\subsection{Semigroup $S_{\text{S2}}$}
\label{sec:S2}

Let us now consider the semigroup
$S_{\text{S2}}=
\left\{
  \lambda_{0},
  \lambda_{1},
  \lambda_{2}
\right\}$
defined by the multiplication law
\begin{equation}
\lambda_{\alpha} \lambda_{\beta} =
\left\{
\begin{array}{ll}
\lambda_{\alpha + \beta}   & \text{if } \alpha + \beta \leq 2 \\ 
\lambda_{\alpha +\beta -2} & \text{if } \alpha + \beta > 2
\end{array}
\right.,
\label{3.1}
\end{equation}
or, equivalently, by the multiplication table
\begin{equation}
\begin{array}{c|ccc}
& \lambda_{0} & \lambda_{1} & \lambda_{2} \\
\hline
\lambda_{0} & \lambda_{0} & \lambda_{1} & \lambda _{2} \\ 
\lambda_{1} & \lambda_{1} & \lambda_{2} & \lambda _{1} \\ 
\lambda_{2} & \lambda_{2} & \lambda_{1} & \lambda _{2}
\end{array}
\label{3.2}
\end{equation}

Take now the partition $S=S_{0}\cup S_{1}$, with
\begin{align}
S_{0} & =\left\{ \lambda _{0},\lambda _{2}\right\} ,  \label{3.3} \\
S_{1} & =\left\{ \lambda _{1} \right\} .  \label{3.4}
\end{align}
This partition is said to be resonant, since it satisfies 
[cf.~eqs.~(\ref{V000})--(\ref{dos14})]
\begin{align}
S_{0} \cdot S_{0} & \subset S_{0}, \\
S_{0} \cdot S_{1} & \subset S_{1}, \\
S_{1} \cdot S_{1} & \subset S_{0}.
\end{align}

Theorem~4.2 from Ref.~\cite{Iza06b} now assures us that 
\begin{equation}
\mathfrak{G}_{\text{R}} = W_{0} \oplus W_{1}
\end{equation}
is a \emph{resonant subalgebra} of $S_{\text{S2}}\times \mathfrak{g}$, where
\begin{align}
W_{0} & =
S_{0} \times V_{0} =
\left\{
  \lambda_{0},
  \lambda_{2}
\right\}
\otimes
\left\{
  \bar{\bm{J}}_{ab}
\right\} =
\left\{ 
  \lambda_{0} \bar{\bm{J}}_{ab},
  \lambda_{2} \bar{\bm{J}}_{ab}
\right\},
\\
W_{1} & =
S_{1} \times V_{1} =
\left\{
  \lambda_{1}
\right\}
\otimes
\left\{
  \bar{\bm{P}}_{a}
\right\} =
\left\{
  \lambda_{1} \bar{\bm{P}}_{a}
\right\}.
\end{align}

Relabeling the generators of the resonant subalgebra as
$\bar{\bm{J}}_{ab,0} = \lambda_{0} \bar{\bm{J}}_{ab}$,
$\bar{\bm{J}}_{ab,2} = \lambda_{2} \bar{\bm{J}}_{ab}$, and
$\bar{\bm{P}}_{a,1} = \lambda_{1} \bar{\bm{P}}_{a}$,
we are left with the following commutation relations:
\begin{align}
\left[
  \bar{\bm{J}}_{ab,0},
  \bar{\bm{J}}_{cd,0}
\right]
& =
\lambda_{0} \lambda _{0}
\left[
  \bar{\bm{J}}_{ab},
  \bar{\bm{J}}_{cd}
\right]
\nonumber \\ & =
\lambda_{0}
\left[
  \bar{\bm{J}}_{ab},
  \bar{\bm{J}}_{cd}
\right]
\nonumber \\ & =
\eta_{ad} \bar{\bm{J}}_{bc,0} +
\eta_{bc} \bar{\bm{J}}_{ad,0} -
\eta_{ac} \bar{\bm{J}}_{bd,0} -
\eta_{bd} \bar{\bm{J}}_{ac,0},
\end{align}
\begin{align}
\left[
  \bar{\bm{J}}_{ab,2},
  \bar{\bm{J}}_{cd,2}
\right]
& =
\lambda_{2} \lambda _{2}
\left[
  \bar{\bm{J}}_{ab},
  \bar{\bm{J}}_{cd}
\right]
\nonumber \\ & =
\lambda_{2}
\left[
  \bar{\bm{J}}_{ab},
  \bar{\bm{J}}_{cd}
\right]
\nonumber \\ & =
\eta_{ad} \bar{\bm{J}}_{bc,2} +
\eta_{bc} \bar{\bm{J}}_{ad,2} -
\eta_{ac} \bar{\bm{J}}_{bd,2} -
\eta_{bd} \bar{\bm{J}}_{ac,2},
\end{align}
\begin{align}
\left[
  \bar{\bm{J}}_{ab,0},
  \bar{\bm{J}}_{cd,2}
\right]
& =
\lambda_{0} \lambda_{2}
\left[
  \bar{\bm{J}}_{ab},
  \bar{\bm{J}}_{cd}
\right]
\nonumber \\ & =
\lambda_{2}
\left[
  \bar{\bm{J}}_{ab},
  \bar{\bm{J}}_{cd}
\right]
\nonumber \\ & =
\eta_{ad} \bar{\bm{J}}_{bc,2} +
\eta_{bc} \bar{\bm{J}}_{ad,2} -
\eta_{ac} \bar{\bm{J}}_{bd,2} -
\eta_{bd} \bar{\bm{J}}_{ac,2},
\end{align}
\begin{align}
\left[
  \bar{\bm{J}}_{ab,0},
  \bar{\bm{P}}_{c,1}
\right]
& =
\lambda_{0} \lambda _{1}
\left[ 
  \bar{\bm{J}}_{ab},
  \bar{\bm{P}}_{c}
\right]
\nonumber \\ & =
\lambda_{1}
\left[
  \bar{\bm{J}}_{ab},
  \bar{\bm{P}}_{c}
\right]
\nonumber \\ & =
\eta_{cb} \bar{\bm{P}}_{a,1} -
\eta_{ac} \bar{\bm{P}}_{b,1},
\end{align}
\begin{align}
\left[
  \bar{\bm{J}}_{ab,2},
  \bar{\bm{P}}_{c,1}
\right]
& =
\lambda_{2} \lambda _{1}
\left[ 
  \bar{\bm{J}}_{ab},
  \bar{\bm{P}}_{c}
\right]
\nonumber \\ & =
\lambda_{1}
\left[
  \bar{\bm{J}}_{ab},
  \bar{\bm{P}}_{c}
\right]
\nonumber \\ & =
\eta_{bc} \bar{\bm{P}}_{a,1} -
\eta_{ac} \bar{\bm{P}}_{b,1},
\end{align}
\begin{align}
\left[
  \bar{\bm{P}}_{a,1},
  \bar{\bm{P}}_{b,1}
\right]
& =
\lambda_{1} \lambda_{1}
\left[ 
  \bar{\bm{P}}_{a},
  \bar{\bm{P}}_{b}
\right]
\nonumber \\ & =
\lambda_{2}
\left[
  \bar{\bm{P}}_{a},
  \bar{\bm{P}}_{b}
\right]
\nonumber \\ & =
\bar{\bm{J}}_{ab,2},
\end{align}
where we have used the commutation relations of the AdS algebra
and the multiplication law~(\ref{3.1}) of the semigroup $S_{\text{S2}}$.

The identifications
$\tilde{\bm{J}}_{ab} = \bar{\bm{J}}_{ab,0}$,
$\tilde{\bm{Z}}_{ab} = \bar{\bm{J}}_{ab,2}$, and
$\tilde{\bm{P}}_{a} = \bar{\bm{P}}_{a,1}$
lead to the following algebra:
\begin{align}
\left[
  \tilde{\bm{J}}_{ab},
  \tilde{\bm{J}}_{cd}
\right]
& =
\eta_{ad} \tilde{\bm{J}}_{bc} +
\eta_{bc} \tilde{\bm{J}}_{ad} -
\eta_{ac} \tilde{\bm{J}}_{bd} -
\eta_{bd} \tilde{\bm{J}}_{ac},
\\
\left[
  \tilde{\bm{J}}_{ab},
  \tilde{\bm{P}}_{c}
\right]
& =
\eta_{bc} \tilde{\bm{P}}_{a} -
\eta_{ac} \tilde{\bm{P}}_{b},
\\
\left[
  \tilde{\bm{P}}_{a},
  \tilde{\bm{P}}_{b}
\right]
& =
\tilde{\bm{Z}}_{ab},
\\
\left[
  \tilde{\bm{J}}_{ab},
  \tilde{\bm{Z}}_{cd}
\right]
& =
\eta_{ad} \tilde{\bm{Z}}_{bc} +
\eta_{bc} \tilde{\bm{Z}}_{ad} -
\eta_{ac} \tilde{\bm{Z}}_{bd} -
\eta_{bd} \tilde{\bm{Z}}_{ac},
\\
\left[
  \tilde{\bm{Z}}_{ab},
  \tilde{\bm{P}}_{c}
\right]
& =
\eta_{bc} \tilde{\bm{P}}_{a} -
\eta_{ac} \tilde{\bm{P}}_{b},
\\
\left[
  \tilde{\bm{Z}}_{ab},
  \tilde{\bm{Z}}_{cd}
\right]
& =
\eta_{ad} \tilde{\bm{Z}}_{bc} +
\eta_{bc} \tilde{\bm{Z}}_{ad} -
\eta_{ac} \tilde{\bm{Z}}_{bd} -
\eta_{bd} \tilde{\bm{Z}}_{ac},
\end{align}
which matches the SSEP algebra~(\ref{eext1})--(\ref{2.1}) obtained in Refs.~\cite{Sor04,Dup05,Sor06,Sor10}, up to (inessential) numerical factors.

\subsection{Relationship between the multiplication tables of the semigroups
$S_{\text{S3}}$ and $S_{\text{S2}}$}

In section~\ref{sec:S3}, the SSEP algebra~(\ref{2.12'})--(\ref{2.13}) was obtained
through an $S$-expansion using the semigroup $S_{\text{S3}}$,
whose multiplication table is given in eq.~(\ref{dos15}).
The procedure involves imposing the condition known as $0_{S}$-reduction~\cite{Iza06b}.

In section~\ref{sec:S2}, the SSEP algebra~(\ref{eext1})--(\ref{2.1}) was obtained
(up to inessential numerical factors)
through an $S$-expansion using the semigroup $S_{\text{S2}}$,
whose multiplication table is given in eq.~(\ref{3.2}).
In stark contrast with the previous case, the procedure does not involve imposing the $0_{S}$-reduction.

This curious state of affairs can be clarified by
promoting the semigroup $S_{\text{S2}}$ to a \emph{ring}%
\footnote{Here we do not require that the elements of the ring form a group under multiplication, but rather only a semigroup.}
and setting
\begin{align}
\tilde{\lambda}_{1} & = \lambda _{0} - \lambda _{2}, \\
\tilde{\lambda}_{2} & = \lambda _{2},  \label{3.5} \\
\tilde{\lambda}_{0} & = \lambda _{1}.
\end{align}
This amounts to a change of basis in $S_{\text{S2}}$ and leads to the following multiplicacion table:
\begin{equation}
\begin{array}{c|ccc}
& \tilde{\lambda}_{0} & \tilde{\lambda}_{1} & \tilde{\lambda}_{2} \\
\hline
\tilde{\lambda}_{0} & \tilde{\lambda}_{2} & 0 & \tilde{\lambda}_{0} \\ 
\tilde{\lambda}_{1} & 0 & \tilde{\lambda}_{1} & 0 \\ 
\tilde{\lambda}_{2} & \tilde{\lambda}_{0} & 0 & \tilde{\lambda}_{2}
\end{array}
\label{3.6}
\end{equation}
This multiplication table exactly matches the multiplication table of the $S_{\text{S3}}$ semigroup
[see eq.~(\ref{dos15})],
except for the rows and columns involving $\lambda_{3}$.
In place of $\lambda_{3}$, the symbol ``0'' in~(\ref{3.6}) now stands for the \emph{additive} zero of the $S_{\text{S2}}$ ring.

The generators $\bm{N}_{ab}$ and $\bm{L}_{AB}$ can be recovered by setting
\begin{align}
\bm{N}_{ab} & =
\tilde{\lambda}_{1} \bar{\bm{J}}_{ab} =
\left(
  \lambda_{0} - \lambda_{2}
\right)
\bar{\bm{J}}_{ab},
\\
\bm{L}_{ab} & =
\tilde{\lambda}_{2} \bar{\bm{J}}_{ab} =
\lambda_{2} \bar{\bm{J}}_{ab},
\\
\bm{L}_{a,D} & =
\tilde{\lambda}_{0} \bar{\bm{P}}_{a} =
\lambda_{1} \bar{\bm{P}}_{a},
\end{align}
without invoking the $0_{S}$-reduction.
The advantage of not using the $0_{S}$-reduction is that it facilitates the construction of Casimir operators,
as discussed in section~\ref{sec:Casgral}.

\section{Casimir operators for $S$-expanded Lie Algebras}
\label{sec:Casgral}

In this section we consider the construction of Casimir operators for
$S$-expanded Lie algebras.
We then compute the Casimir operators for the SSEP algebra
obtained by Soroka et~al.\ in Refs.~\cite{Sor04,Dup05,Sor06,Sor10}.

\subsection{Construction of Casimir operators for $S$-expanded Lie algebras}

Let $\mathfrak{g}$ be a Lie algebra and let
$\left\{ \bm{T}_{A}, A=1,\ldots,\dim \mathfrak{g} \right\}$
be the generators of $\mathfrak{g}$.
A Casimir operator $\bm{C}_{m}$ of degree $m$ can be written as 
\begin{equation}
\bm{C}_{m} =
C^{A_{1 }\cdots A_{m}}
\bm{T}_{A_{1}} \cdots \bm{T}_{A_{m}},
\label{3.7}
\end{equation}
which, by definition, satisfies the condition that,
$\forall~\bm{T}_{A_{0}} \in \mathfrak{g}$,
\begin{equation}
\left[
  \bm{T}_{A_{0}},
  \bm{C}_{m}
\right]
= 0,
\label{3.8}
\end{equation}
where the coefficients $C^{A_{1} \cdots A_{m}}$ form a symmetric invariant tensor for the corresponding Lie group.
This means that the operators $\bm{C}_{m}$ ($m=2,3,\ldots$) are invariants of the enveloping algebra.
From eqs.~(\ref{3.7}) and~(\ref{3.8}) we have
\begin{equation}
\left[
  \bm{T}_{A_{0}},
  \bm{C}_{m}
\right] =
\left(
  \sum_{p=1}^{m}
  \swne{f}{A_{0} B}{A_{p}}
  C^{A_{1} \cdots A_{p-1} B A_{p+1} \cdots A_{m}}
\right)
\bm{T}_{A_{1}}
\cdots
\bm{T}_{A_{m}},
\label{3.9}
\end{equation}
where $\swne{f}{AB}{C}$ are the structure constants of $\mathfrak{g}$.
Therefore, the ``Casimir Condition''~(\ref{3.8}) is seen to be equivalent to
\begin{equation}
\sum_{p=1}^{m}
\swne{f}{A_{0} B}{(A_{p}}
C^{A_{1} \cdots A_{p-1} | B | A_{p+1} \cdots A_{m})} = 0.
\label{3.10}
\end{equation}

For the standard, quadratic (i.e., $m=2$) Casimir operator, eq.~(\ref{3.10}) reads
\begin{equation}
\swne{f}{A_{0} B}{A_{1}}
C^{B A_{2}} +
\swne{f}{A_{0} B}{A_{2}}
C^{A_{1} B} = 0.
\label{3.11}
\end{equation}

The structure constants of an $S$-expanded Lie algebra are given by
\begin{equation}
\swne{f}{\left( A, \alpha \right) \left( B, \beta \right)}{\left( C, \gamma \right)} =
\swne{K}{\alpha \beta}{\gamma}
\swne{f}{AB}{C},
\label{Ec_S_Ctes}
\end{equation}
where $\swne{K}{\alpha \beta}{\gamma}$ stands for the ``two-selector'' of the semigroup $S$~\cite{Iza06b}.
The (quadratic) Casimir condition for an $S$-expanded Lie algebra thus reads
\begin{equation}
\swne{K}{\alpha_{0} \beta}{\alpha_{1}}
\swne{f}{A_{0} B}{A_{1}}
C^{\left( B, \beta \right) \left( A_{2}, \alpha _{2} \right)} +
\swne{K}{\alpha_{0} \beta}{\alpha_{2}}
\swne{f}{A_{0} B}{A_{2}}
C^{\left( A_{1}, \alpha_{1} \right) \left( B, \beta \right)} = 0.
\label{Ec_Casimir_Cuadratico}
\end{equation}

Consider now the following ansatz for the components of the (quadratic) Casimir operator of an $S$-expanded algebra:
\begin{equation}
C^{\left( A, \alpha \right) \left( B, \beta \right)} =
m^{\alpha \beta } C^{AB},
\label{3.12}
\end{equation}%
where $C^{AB}$ are the components of the (quadratic) Casimir operator for the original
algebra $\mathfrak{g}$ and $m^{\alpha \beta}$ are the components of a
symmetric tensor, associated to the semigroup $S$, which must be determined.

Introducing~(\ref{Ec_S_Ctes}) in~(\ref{Ec_Casimir_Cuadratico}) we obtain%
\begin{equation}
\swne{K}{\alpha_{0} \beta}{\alpha_{1}}
m^{\beta \alpha_{2}}
\swne{f}{A_{0} B}{A_{1}}
C^{B A_{2}} +
\swne{K}{\alpha_{0} \beta}{\alpha_{2}}
m^{\alpha_{1} \beta}
\swne{f}{A_{0} B}{A_{2}}
C^{A_{1} B} = 0.
\label{3.13}
\end{equation}
Eq.~(\ref{3.13}) is satisfied if the following condition holds:
\begin{equation}
\swne{K}{\alpha_{0} \beta}{\alpha_{1}}
m^{\beta \alpha_{2}} =
\swne{K}{\alpha_{0} \beta}{\alpha_{2}}
m^{\alpha_{1} \beta}.
\label{Ec_Cond_m}
\end{equation}
To check this, let us plug eq.~(\ref{Ec_Cond_m}) into eq.~(\ref{3.13}) to find
\begin{equation}
\swne{K}{\alpha_{0} \beta}{\alpha_{1}}
m^{\beta \alpha_{2}}
\swne{f}{A_{0} B}{A_{1}}
C^{B A_{2}} +
\swne{K}{\alpha_{0} \beta}{\alpha_{2}}
m^{\alpha_{1} \beta}
\swne{f}{A_{0} B}{A_{2}}
C^{A_{1} B} =
\swne{K}{\alpha_{0} \beta}{\alpha_{1}}
m^{\beta \alpha_{2}}
\left(
  \swne{f}{A_{0} B}{A_{1}}
  C^{B A_{2}} +
  \swne{f}{A_{0} B}{A_{2}}
  C^{A_{1} B}
\right) = 0,
\end{equation}
where the expression in parentheses vanishes because $C^{AB}$
are the components of the (quadratic) Casimir operator for the
original algebra $\mathfrak{g}$
[cf.~eq.~(\ref{3.11})].

The following Theorem provides us with a way of finding a tensor $m^{\alpha \beta}$
with the required properties.

\begin{theorem}
Let $\swne{K}{\alpha \beta}{\gamma}$ be the two-selector for an abelian semigroup $S$, and define
\begin{equation}
m_{\alpha \beta} = \alpha_{\gamma} \swne{K}{\alpha \beta}{\gamma},
\label{3.15}
\end{equation}
where the $\alpha_{\gamma}$ are numerical coefficients.
If the $\alpha_{\gamma}$ are chosen in such a way that $m_{\alpha \beta}$ is an invertible ``metric,''
then its inverse
$m^{\alpha \beta}$
(which, by definition, satisfies
$m^{\alpha \lambda} m_{\lambda \beta} = \delta_{\beta}^{\alpha}$)
fulfills eq.~(\ref{Ec_Cond_m}).
\end{theorem}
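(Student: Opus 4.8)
The plan is to reduce the assertion about the inverse metric $m^{\alpha \beta}$ to an equivalent assertion about the metric $m_{\alpha \beta}$ itself, which can then be proved directly from the algebraic properties of the two-selector. First I would record the two facts about $\swne{K}{\alpha \beta}{\gamma}$ that do all the work. Writing the semigroup product as $\lambda_{\alpha} \lambda_{\beta} = \swne{K}{\alpha \beta}{\gamma} \lambda_{\gamma}$, abelianity gives the symmetry $\swne{K}{\alpha \beta}{\gamma} = \swne{K}{\beta \alpha}{\gamma}$, while associativity $\left( \lambda_{\alpha} \lambda_{\beta} \right) \lambda_{\gamma} = \lambda_{\alpha} \left( \lambda_{\beta} \lambda_{\gamma} \right)$ gives
\[
\swne{K}{\alpha \beta}{\rho} \swne{K}{\rho \gamma}{\sigma} = \swne{K}{\beta \gamma}{\rho} \swne{K}{\alpha \rho}{\sigma}.
\]
In particular the symmetry of $\swne{K}{\alpha \beta}{\gamma}$ makes $m_{\alpha \beta}$ in~(\ref{3.15}) symmetric, hence $m^{\alpha \beta}$ symmetric as well.

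The heart of the argument is the ``lowered'' version of~(\ref{Ec_Cond_m}), namely the identity
\[
\swne{K}{\alpha_{0} \tau}{\gamma} m_{\gamma \sigma} = \swne{K}{\alpha_{0} \sigma}{\gamma} m_{\gamma \tau}.
\]
I would prove this by inserting the definition~(\ref{3.15}) and applying associativity,
\[
\swne{K}{\alpha_{0} \tau}{\gamma} m_{\gamma \sigma} = \alpha_{\rho} \swne{K}{\alpha_{0} \tau}{\gamma} \swne{K}{\gamma \sigma}{\rho} = \alpha_{\rho} \swne{K}{\tau \sigma}{\gamma} \swne{K}{\alpha_{0} \gamma}{\rho}.
\]
The right-hand side is manifestly symmetric under $\sigma \leftrightarrow \tau$ by abelianity of $\swne{K}{\tau \sigma}{\gamma}$, and running the same two steps backwards with $\sigma$ and $\tau$ exchanged rewrites it as $\swne{K}{\alpha_{0} \sigma}{\gamma} m_{\gamma \tau}$, which establishes the claim.

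Finally I would raise indices. Contracting the lowered identity with $m^{\sigma \alpha_{1}} m^{\tau \alpha_{2}}$ and using $m_{\gamma \sigma} m^{\sigma \alpha_{1}} = \delta_{\gamma}^{\alpha_{1}}$ (legitimate precisely because the $\alpha_{\gamma}$ were chosen so that $m_{\alpha \beta}$ is invertible), the left-hand side collapses to $\swne{K}{\alpha_{0} \beta}{\alpha_{1}} m^{\beta \alpha_{2}}$ and the right-hand side to $\swne{K}{\alpha_{0} \beta}{\alpha_{2}} m^{\alpha_{1} \beta}$ after relabeling and using the symmetry of $m^{\alpha \beta}$, which is exactly~(\ref{Ec_Cond_m}). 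The only genuine obstacle here is conceptual rather than computational: one must recognize that~(\ref{Ec_Cond_m}) is equivalent, via the invertible metric, to a statement with all lower indices, since only in that form does the definition~(\ref{3.15}) become usable and the associativity identity directly applicable. The invertibility hypothesis enters exactly once, to pass between the two forms.
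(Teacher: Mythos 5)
Your proof is correct and follows essentially the same route as the paper's: both establish the lowered-index identity $\swne{K}{\alpha_{0} \mu}{\gamma} m_{\gamma \nu} = \swne{K}{\alpha_{0} \nu}{\gamma} m_{\gamma \mu}$ from associativity and commutativity of the semigroup (the paper packages these as $\left( \lambda_{\alpha_{0}} \lambda_{\mu} \right) \lambda_{\nu} = \left( \lambda_{\alpha_{0}} \lambda_{\nu} \right) \lambda_{\mu}$ contracted with $\alpha_{\lambda}$, while you expand $m$ by its definition first), and then both invoke invertibility to raise indices and obtain eq.~(\ref{Ec_Cond_m}). Your presentation of the index-raising step, contracting with $m^{\sigma \alpha_{1}} m^{\tau \alpha_{2}}$ and noting the symmetry of $m^{\alpha \beta}$, is if anything slightly cleaner than the paper's insertion-and-cancellation of $\delta$'s.
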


\begin{proof}
From the associativity and commutativity of the inner binary
operation (``multiplication'') of the semigroup $S$, we have
\begin{equation}
\left(
  \lambda_{\alpha_{0}}
  \lambda_{\mu}
\right)
\lambda_{\nu} =
\left(
  \lambda_{\alpha_{0}}
  \lambda_{\nu}
\right)
\lambda_{\mu}.
\label{3.16}
\end{equation}%
In terms of the two-selectors $\swne{K}{\alpha \beta}{\gamma}$,
eq.~(\ref{3.16}) may be cast as
\begin{equation}
\swne{K}{\alpha_{0} \mu}{\alpha_{1}}
\swne{K}{\alpha_{1} \nu}{\lambda} =
\swne{K}{\alpha_{0} \nu}{\alpha_{2}}
\swne{K}{\alpha_{2} \mu}{\lambda}.
\label{3.17}
\end{equation}
Multiplying~(\ref{3.17}) by $\alpha_{\lambda}$, we find 
\begin{align}
\swne{K}{\alpha_{0} \mu}{\alpha_{1}}
m_{\alpha_{1} \nu}
& =
\swne{K}{\alpha_{0} \nu}{\alpha_{2}}
m_{\alpha_{2} \mu}
\nonumber \\
\swne{K}{\alpha_{0} \beta}{\alpha_{1}}
\delta_{\mu}^{\beta}
m_{\alpha_{1} \nu}
& =
\swne{K}{\alpha_{0} \beta}{\alpha_{2}}
\delta_{\nu}^{\beta}
m_{\alpha_{2} \mu}
\nonumber \\
\swne{K}{\alpha_{0} \beta}{\alpha_{1}}
m^{\beta \alpha_{2}}
m_{\alpha_{2} \mu}
m_{\alpha_{1} \nu}
& =
\swne{K}{\alpha_{0} \beta}{\alpha_{2}}
m^{\alpha_{1} \beta}
m_{\alpha_{2} \mu}
m_{\alpha_{1} \nu},
\label{3.18}
\end{align}
so that
\begin{equation}
\swne{K}{\alpha_{0} \beta}{\alpha_{1}}
m^{\beta \alpha_{2}} =
\swne{K}{\alpha_{0} \beta}{\alpha_{2}}
m^{\alpha_{1} \beta},
%\qedhere
\label{3.19}
\end{equation}
as we set out to prove.
\end{proof}

This means that if
$\bm{C} = C^{AB} \bm{T}_{A} \bm{T}_{B}$
is the (quadratic) Casimir operator for the original algebra $\mathfrak{g}$, then
\begin{equation}
\bm{C} =
m^{\alpha \beta}
C^{AB}
\bm{T}_{\left( A, \alpha \right)}
\bm{T}_{\left( B, \beta \right)}
\label{3.20}
\end{equation}
is the (quadratic) Casimir operator for the $S$-expanded Lie algebra.

\subsection{Casimir operators for anti-de~Sitter algebra}

Using the representation given by the Dirac matrices for the AdS algebra,
\begin{align}
\bm{P}_{a} & =
\frac{1}{2} \Gamma_{a}, \\
\bm{J}_{ab} & =
\frac{1}{2} \Gamma_{ab},
\label{3.21}
\end{align}
we have that the Killing metric $k_{AB}$ for the AdS algebra can be written as 
\begin{align}
k_{AB} & =
\frac{1}{\Tr \left( \openone \right)}
\Tr
\left(
  \bm{T}_{A}
  \bm{T}_{B}
\right)
\\ & =
\frac{1}{\Tr \left( \openone \right)}
\Tr
\left(
  \frac{1}{2}
  \left\{
    \bm{T}_{A},
    \bm{T}_{B}
  \right\}
\right),
\end{align}
which for $d\geq 4$ is given by
\begin{align}
k_{a,b} & = \frac{1}{4} \eta_{ab}  \\
k_{ab,cd} & = -\frac{1}{4} \eta_{\left[ ab \right] \left[ cd \right]},
\label{3.22} \\
k_{ab,c} & = 0,
\label{eq:m12=0}
\end{align}
where
\begin{equation}
\eta_{\left[ ab \right] \left[ cd \right]} =
\delta_{ab}^{mn} \eta_{mc} \eta_{nd}.
\label{3.23}
\end{equation}

For an arbitrary algebra, the quadratic Casimir operator is given by 
\begin{equation}
\bm{C} = k^{AB} \bm{T}_{A} \bm{T}_{B},
\label{3.24}
\end{equation}
where $k^{AB}$ stands for the inverse of the Killing metric $k_{AB}$.

For the AdS algebra we have 
\begin{align}
k^{a,b} & = 4 \eta^{ab} \label{eq:kAdSab}\\
k^{ab,c} & = 0, \\
k^{ab,cd} & = - \eta^{\left[ ab \right] \left[ cd \right]},
\label{3.25}
\end{align}%
so that 
\begin{equation}
\bm{C}_{\text{AdS}} = 4
\left(
  \bm{P}^{a}
  \bm{P}_{a}
  - \frac{1}{2}
  \bm{J}_{ab}
  \bm{J}^{ab}
\right).
\label{3.26}
\end{equation}
This result is valid for any dimension $d \geq 4$.

There is another Killing ``metric'' that can be constructed only in $d=4$.
This is given by 
\begin{equation}
\bar{k}_{AB} =
\frac{1}{\Tr \left( \openone \right)}
\Tr
\left(
  \Gamma_{\ast}
  \bm{T}_{A}
  \bm{T}_{B}
\right),
\label{3.27}
\end{equation}%
where $\Gamma _{\ast}$ is the usual $\gamma_{5}$ matrix.
A direct calculation shows that
\begin{align}
\bar{k}_{a,b} & = 0, \\
\bar{k}_{ab,cd} & = - \frac{1}{4} \epsilon_{abcd}, \\
\bar{k}_{ab,c} & = 0.
\label{3.28}
\end{align}
This ``metric,'' however, is not invertible,
so that we cannot construct a Casimir operator for the AdS algebra from it.
On second thought, it \emph{is} possible to use this ``metric'' to construct a Casimir operator for the
Lorentz subalgebra, because, when so restricted, the metric turns out to be invertible.
We find
\begin{equation}
\bar{k}^{ab,cd} = - \epsilon^{abcd}.
\label{3.29}
\end{equation}
This means that a (quadratic) Casimir operator for the Lorentz group is given by 
\begin{equation}
\bm{\bar{C}}_{\text{L}} = - \epsilon^{abcd} \bm{J}_{ab} \bm{J}_{cd}.
\label{3.30}
\end{equation}

\section{Casimir Operators for the semi-simple extended Poincar\'{e} algebra}
\label{sec:Caspart}

We consider the construction of the metric $m_{\alpha \beta }$ corresponding
to the semigroup $S_{\text{S2}}$, whose multiplication law is given in eqs.~(\ref{3.1}) and~(\ref{3.2}).
The semigroup $S_{\text{S2}}$ is interesting
because, although it is not a group, it is \emph{cyclic} (i.e., similar to $\mathbbm{Z}_{3}$). The
elements of the semigroup can be represented by the following set of matrices:
\begin{equation}
\lambda_{0} =
\left( 
\begin{array}{ccc}
1 & 0 & 0 \\ 
0 & 1 & 0 \\ 
0 & 0 & 1%
\end{array}%
\right),
\quad
\lambda_{1} =
\left( 
\begin{array}{ccc}
0 & 0 & 0 \\ 
1 & 0 & 1 \\ 
0 & 1 & 0%
\end{array}%
\right),
\quad
\lambda_{2} =
\left( 
\begin{array}{ccc}
0 & 0 & 0 \\ 
0 & 1 & 0 \\ 
1 & 0 & 1%
\end{array}%
\right).
\label{3.31}
\end{equation}%
It is straightforward to verify that the representation~(\ref{3.31}) faithfully
satisfies eqs.~(\ref{3.1}) and~(\ref{3.2}).
The two-selectors
$\swne{K}{\alpha \beta}{\gamma}$
of $S_{\text{S2}}$ can be represented as
[cf.~eqs.~(1)--(2) from Ref.~\cite{Iza06b}]
\begin{equation}
\swne{K}{\alpha \beta}{0} =
\left( 
\begin{array}{ccc}
1 & 0 & 0 \\ 
0 & 0 & 0 \\ 
0 & 0 & 0%
\end{array}%
\right),
\quad
\swne{K}{\alpha \beta}{1} =
\left( 
\begin{array}{ccc}
0 & 1 & 0 \\ 
1 & 0 & 1 \\ 
0 & 1 & 0%
\end{array}%
\right),
\quad
\swne{K}{\alpha \beta}{2} =
\left( 
\begin{array}{ccc}
0 & 0 & 1 \\ 
0 & 1 & 0 \\ 
1 & 0 & 1%
\end{array}%
\right).
\label{3.32}
\end{equation}
This, in turn, implies that a generic metric $m_{\alpha \beta}$ for $S_{\text{S2}}$ reads
\begin{equation}
m_{\alpha \beta} =
\alpha_{\lambda}
\swne{K}{\alpha \beta}{\lambda} =
\left( 
\begin{array}{ccc}
\alpha_{0} & \alpha_{1} & \alpha_{2} \\ 
\alpha_{1} & \alpha_{2} & \alpha_{1} \\ 
\alpha_{2} & \alpha_{1} & \alpha_{2}
\end{array}
\right),
\label{3.33}
\end{equation}%
where the $\alpha_{\lambda}$ are numerical coefficients.
The inverse metric is given by
\begin{equation}
m^{\alpha \beta} =
\frac{1}{\det \left( m_{\alpha \beta} \right)}
\left( 
\begin{array}{ccc}
\alpha_{2}^{2} - \alpha_{1}^{2} &
0 &
-\left( \alpha_{2}^{2} - \alpha_{1}^{2} \right)
\\ 
0 &
\alpha_{2} \left( \alpha_{0} - \alpha_{2} \right) &
-\alpha_{1} \left( \alpha_{0} - \alpha_{2} \right)
\\ 
-\left( \alpha_{2}^{2} - \alpha_{1}^{2} \right) &
-\alpha_{1} \left( \alpha_{0} - \alpha_{2} \right) &
\alpha_{0} \alpha_{2} - \alpha_{1}^{2}
\end{array}%
\right),
\label{3.34}
\end{equation}%
where $\alpha_{0}$, $\alpha_{1}$ and $\alpha_{2}$ must be chosen so that
\begin{equation}
\det \left( m_{\alpha \beta} \right) =
\left( \alpha_{0} - \alpha_{2} \right)
\left( \alpha_{2} + \alpha_{1} \right)
\left( \alpha_{2} - \alpha_{1} \right)
\neq 0.
\label{3.35}
\end{equation}

The quadratic Casimir operators for the SSEP algebra has the form%
\footnote{The $m^{12}$-term is absent from the sum because the corresponding components of the Casimir operator for the AdS algebra in $d \geq 4$ vanish, $C^{ab,c} = C^{a,bc} =0$ [see eq.~(\ref{eq:m12=0})].}
\begin{align}
\bm{C} & =
%C^{\left( A, \alpha \right) \left( B, \beta \right)}
%\bm{T}_{\left( A, \alpha \right)}
%\bm{T}_{\left( B, \beta \right)}
%\nonumber \\ & =
m^{\alpha \beta}
C^{AB}
\bm{T}_{\left( A, \alpha \right)}
\bm{T}_{\left( B, \beta \right)}
\nonumber \\ & =
m^{00}
C^{ab,cd}
\bm{J}_{ab}
\bm{J}_{cd} +
m^{11}
C^{ab}
\bm{P}_{a}
\bm{P}_{b} +
2 m^{02}
C^{ab,cd}
\bm{J}_{ab}
\bm{Z}_{cd} +
m^{22}
C^{ab,cd}
\bm{Z}_{ab}
\bm{Z}_{cd}
\nonumber \\ & =
\frac{1}{\det \left( m_{\alpha \beta} \right)}
\left[
  \left( \alpha_{2}^{2} - \alpha_{1}^{2} \right)
  C^{ab,cd}
  \bm{J}_{ab}
  \bm{J}_{cd} +
  \alpha _{2} \left( \alpha_{0} - \alpha_{2} \right)
  C^{ab}
  \bm{P}_{a}
  \bm{P}_{b} +
\right.
  \nonumber \\ & 
\left. -
  2 \left( \alpha_{2}^{2} - \alpha_{1}^{2} \right)
  C^{ab,cd}
  \bm{J}_{ab}
  \bm{Z}_{cd} +
  \left( \alpha_{0} \alpha_{2} - \alpha_{1}^{2} \right)
  C^{ab,cd}
  \bm{Z}_{ab}
  \bm{Z}_{cd}
\right],
\label{3.36}
\end{align}
where $C^{AB}$ are the components of the Casimir operator for the AdS algebra.
Plugging eqs.~(\ref{eq:kAdSab})--(\ref{3.25}) into eq.~(\ref{3.36}) we find
\begin{equation}
\bm{C}=\frac{4}{\det \left( m_{\alpha \beta} \right)}\left[ \frac{1}{2}\left( \alpha
_{2}^{2}-\alpha _{1}^{2}\right) \bm{J}_{ab}\bm{J}%
^{ab}+\alpha _{2}\left( \alpha _{0}-\alpha _{2}\right) \bm{P}_{a}%
\bm{P}^{a}-\left( \alpha _{2}^{2}-\alpha _{1}^{2}\right) \bm{%
J}_{ab}\bm{Z}^{ab}+\frac{1}{2}\left( \alpha _{0}\alpha _{2}-\alpha
_{1}^{2}\right) \bm{Z}_{ab}\bm{Z}^{ab}\right] .  \label{3.37}
\end{equation}
Defining
\begin{align}
\alpha & =
\alpha_{2} \alpha_{0} - \alpha_{2}^{2}, \\
\beta & =
\alpha_{2} \alpha_{0} - \alpha_{1}^{2},
\label{3.38}
\end{align}%
eq.~(\ref{3.37}) can be cast in the form
\begin{align}
\bm{C}& =\frac{4}{\det \left( m_{\alpha \beta} \right)}\left[ \frac{1}{2}\left( \beta -\alpha
\right) \bm{J}_{ab}\bm{J}^{ab}+\alpha \bm{P}_{a}%
\bm{P}^{a}-\left( \beta -\alpha \right) \bm{J}_{ab}%
\bm{Z}^{ab}+\frac{1}{2}\beta \bm{Z}_{ab}\bm{Z}^{ab}%
\right]  \notag \\
& =\frac{4}{\det \left( m_{\alpha \beta} \right)}\left[ \alpha \left( \bm{P}_{a}\bm{P}^{a}-%
\frac{1}{2}\bm{J}_{ab}\bm{J}^{ab}+\bm{J}_{ab}%
\bm{Z}^{ab}\right) +\beta \left( \frac{1}{2}\bm{J}_{ab}%
\bm{J}^{ab}-\bm{J}_{ab}\bm{Z}^{ab}+\frac{1}{2}%
\bm{Z}_{ab}\bm{Z}^{ab}\right) \right]  \label{3.39}
\end{align}
Since $\alpha$ and $\beta$ are arbitrary,
subject only to the condition $\det \left( m_{\alpha \beta} \right) \neq 0$,
we can conclude that eq.~(\ref{3.39}) shows that
the SSEP possess
\emph{two} independent Casimir operators, namely
\begin{align}
\bm{C}_{1}& =\frac{4\alpha }{\det \left( m_{\alpha \beta} \right)}\left( \bm{P}_{a}%
\bm{P}^{a}+\bm{J}_{ab}\bm{Z}^{ab}-\frac{1}{2}%
\bm{J}_{ab}\bm{J}^{ab}\right) ,  \label{3.40} \\
\bm{C}_{2}& =\frac{2\beta }{\det \left( m_{\alpha \beta} \right)}\left( \bm{Z}_{ab}%
\bm{Z}^{ab}-2\bm{J}_{ab}\bm{Z}^{ab}+\bm{J}%
_{ab}\bm{J}^{ab}\right) .  \label{3.41}
\end{align}

There exists a third Casimir operator,
but it is valid only for the subspace spanned by $\bm{J}_{ab}$ and $\bm{Z}_{ab}$,
and not for the full SSEP algebra.
This Casimir operator is constructed from
$\bar{k}^{\left( ab,cd \right)} = - \epsilon^{abcd}$
[cf.~eq.~(\ref{3.29})],
and is given by
\begin{align}
\bar{\bm{C}}_{JZ} & =
-\frac{1}{\det \left( m_{\alpha \beta} \right)}
\left[
  \left( \alpha_{2}^{2} - \alpha_{1}^{2} \right)
  \epsilon^{abcd}
  \bm{J}_{ab}
  \bm{J}_{cd} -
  2 \left( \alpha_{2}^{2} - \alpha_{1}^{2} \right)
  \epsilon^{abcd}
  \bm{J}_{ab}
  \bm{Z}_{cd} +
  \left( \alpha_{0} \alpha_{2} - \alpha_{1}^{2} \right)
  \epsilon^{abcd}
  \bm{Z}_{ab}
  \bm{Z}_{cd}
\right]
\nonumber \\ & = -
\frac{\epsilon^{abcd}}{\det \left( m_{\alpha \beta} \right)}
\left[
  \alpha
  \bm{Z}_{ab}
  \bm{Z}_{cd} -
  2 \left( \beta - \alpha \right)
  \bm{J}_{ab}
  \bm{Z}_{cd} +
  \left( \beta - \alpha \right)
  \bm{J}_{ab}
  \bm{J}_{cd}
\right].
\label{3.43}
\end{align}

The Casimir operators of the SSEP algebra
obtained in Refs.~\cite{Sor04,Dup05,Sor06,Sor10} are apparently different from the ones shown
in eqs.~(\ref{3.40})--(\ref{3.41}).
The mismatch, however, is only superficial.
Indeed, if we take $c=1$ and $a=i/2$ in eqs.~(2.2) and~(2.3) from Ref.~\cite{Sor06},
we readily get the operators $\bm{C}_{1}$ and $\bm{C}_{2}$ shown in eqs.~(\ref{3.40})--(\ref{3.41}).

Performing the same rescaling and choosing
$\alpha $ $=1$ y $\beta $ $=2$ in $\bm{\bar{C}}_{JZ}$,
we can verify that the Casimir operator $\bm{C}_{3}$ of Ref.~\cite{Sor06}
exactly matches our $\bm{\bar{C}}_{JZ}$ Casimir operator.

\section{A generalized action for Chern--Simons gravity in $2+1$ dimensions}
\label{sec:CS}

In this section we find a rank-two, symmetric invariant tensor for the SSEP algebra
and use it to build the more general action for CS gravity in $2+1$ dimensions.

\subsection{The Invariant Tensor}

It is easy to see that the most general symmetric invariant tensor of rank two for the AdS algebra
in three-dimensional spacetime is given by
(see, e.g., Ref.~\cite{Iza09a})
\begin{align}
\left\langle
  \bm{J}_{ab}
  \bm{J}_{cd}
\right\rangle
& =
\tilde{\mu}_{0}
\left(
  \eta_{ad} \eta_{bc} - \eta_{ac} \eta_{bd}
\right),
\label{ti1} \\
\left\langle
  \bm{J}_{ab}
  \bm{P}_{c}
\right\rangle
& =
\tilde{\mu}_{1}
\epsilon_{abc},
\label{ti2} \\
\left\langle
  \bm{P}_{a}
  \bm{P}_{b}
\right\rangle
& =
\tilde{\mu}_{0}
\eta_{ab},
\label{ti3}
\end{align}
where $\mu_{0}$ and $\mu_{1}$ are arbitrary constants.
Theorem~7.2 from Ref.~\cite{Iza06b} assures us that the only nonzero components
of the corresponding symmetric invariant tensor for the SSEP algebra are
\begin{align}
\left\langle
  \bm{N}_{ab}
  \bm{N}_{cd}
\right\rangle & =
\alpha_{0}
\left(
  \eta_{ad}
  \eta_{bc} -
  \eta_{ac}
  \eta_{bd}
\right),
\label{ti4}
\\
\left\langle
  \bm{L}_{ab}
  \bm{L}_{cd}
\right\rangle & =
\alpha_{2}
\left(
  \eta_{ad}
  \eta_{bc} -
  \eta_{ac}
  \eta_{bd}
\right),
\label{ti6}
\\
\left\langle
  \bm{L}_{ab}
  \bm{L}_{c3}
\right\rangle & =
\alpha_{1}
\epsilon_{abc},
\label{ti8}
\\
\left\langle
  \bm{L}_{a3}
  \bm{L}_{b3}
\right\rangle & =
\alpha_{2}
\eta_{ab},
\label{ti9}
\end{align}
where $\alpha_{0}$, $\alpha_{1}$ and $\alpha_{2}$ are arbitrary constants.

\subsection{Chern--Simons action for the semi-simple extended Poincar\'{e}
algebra in $2+1$ dimensions}

A generic CS Lagrangian in $\left( 2+1 \right)$-dimensional spacetime reads~\cite{Cha89,Cha90,Zan05}
\begin{equation}
L_{\text{CS}}^{2+1} =
2 k
\int_{0}^{1} dt
\left\langle
  \bm{A} \left( t \mathrm{d} \bm{A} + t^{2} \bm{A}^{2} \right)
\right\rangle =
k
\left\langle
  \bm{A} \left( \mathrm{d} \bm{A} + \frac{2}{3} \bm{A}^{2} \right)
\right\rangle,
\label{chs1}
\end{equation}
where $\bm{A}$ is a Lie algebra-valued one-form gauge connection and $k$ is an arbitrary coupling constant.%
\footnote{Wedge product between differential forms is understood throughout. Note that commutators between Lie algebra-valued differential forms carry the expected sign changes, so that, e.g.,
$\left[ \bm{A}, \bm{A} \right] = 2 \bm{A} \bm{A} = 2 \bm{A}^{2}$.}
For the SSEP algebra we may write
\begin{equation}
\bm{A} =
\frac{1}{2}
\varpi^{ab}
\bm{N}_{ab} +
\frac{1}{2}
\omega^{AB}
\bm{L}_{AB} =
\frac{1}{2}
\varpi^{ab}
\bm{N}_{ab} +
\frac{1}{2}
\omega^{ab}
\bm{L}_{ab} +
\omega^{a3}
\bm{L}_{a3}.
\label{chs2}
\end{equation}%
For the sake of convenience, let us define the SSEP-valued one-form gauge fields
\begin{align}
\bm{\varpi} =
\frac{1}{2} \varpi^{ab} \bm{N}_{ab}, \\
\bm{\omega} =
\frac{1}{2} \omega^{ab} \bm{L}_{ab}, \\
\bm{\varphi} =
\omega^{a3} \bm{L}_{a3}.
\end{align}
In terms of these, $\bm{A}$ takes on the simple form
\begin{equation}
\bm{A} = \bm{\varpi} + \bm{\omega} + \bm{\varphi}.
\label{chs3}
\end{equation}%
A straightforward calculation shows that the CS Lagrangian for the SSEP algebra in three-dimensional spacetime may be written as
\begin{align}
L_{\text{SSEP}}^{2+1} & = k
\left\langle
  \bm{\varpi} \mathrm{d} \bm{\varpi} +
  \bm{\varpi} \mathrm{d} \bm{\omega} +
  \bm{\varpi} \mathrm{d} \bm{\varphi} +
  \frac{1}{3} \bm{\varpi}
  \left[
    \bm{\varpi},
    \bm{\varpi}
  \right]
\right\rangle +
\nonumber \\ & + k
\left\langle
  \bm{\omega} \mathrm{d} \bm{\varpi} +
  \bm{\omega} \mathrm{d} \bm{\omega} +
  \bm{\omega} \mathrm{d} \bm{\varphi} +
  \frac{1}{3} \bm{\omega}
  \left[
    \bm{\omega},
    \bm{\omega}
  \right] +
  \frac{2}{3} \bm{\omega}
  \left[
    \bm{\omega},
    \bm{\varphi}
  \right] +
  \frac{1}{3} \bm{\omega}
  \left[
    \bm{\varphi},
    \bm{\varphi}
  \right]
\right\rangle
\nonumber \\ & + k
\left\langle
  \bm{\varphi} \mathrm{d} \bm{\varpi} +
  \bm{\varphi} \mathrm{d} \bm{\omega} +
  \bm{\varphi} \mathrm{d} \bm{\varphi} +
  \frac{1}{3} \bm{\varphi}
  \left[
    \bm{\omega},
    \bm{\omega}
  \right] +
  \frac{2}{3} \bm{\varphi}
  \left[
    \bm{\omega},
    \bm{\varphi}
  \right] +
  \frac{1}{3}
  \bm{\varphi}
  \left[
    \bm{\varphi},
    \bm{\varphi}
  \right]
\right\rangle.
\label{chs5}
\end{align}

The SSEP two-form curvature reads
\begin{align}
\bm{F} & =
\mathrm{d} \bm{A} + \bm{A}^{2}
\nonumber \\ & =
\mathrm{d} \bm{\varpi} +
\mathrm{d} \bm{\omega} +
\mathrm{d} \bm{\varphi} +
\bm{\varpi} \bm{\varpi} +
\bm{\omega} \bm{\omega} +
\bm{\varphi} \bm{\varphi} +
\left[
  \bm{\omega},
  \bm{\varphi}
\right]
\nonumber \\ & =
\left(
  \mathrm{d} \bm{\varpi} +
  \bm{\varpi} \bm{\varpi}
\right) +
\left(
  \mathrm{d} \bm{\omega} +
  \bm{\omega} \bm{\omega}
\right) +
\left(
  \mathrm{d} \bm{\varphi} +
  \bm{\varphi} \bm{\varphi} +
  \left[
    \bm{\omega},
    \bm{\varphi}
  \right]
\right),
\label{chs6}
\end{align}
so that it proves convenient to define the following partial curvatures:
\begin{align}
\tilde{\bm{R}} & =
\mathrm{d} \bm{\varpi} +
\bm{\varpi} \bm{\varpi} =
\mathrm{d} \bm{\varpi} +
\frac{1}{2}
\left[
  \bm{\varpi},
  \bm{\varpi}
\right],
\label{chs7} \\
\bm{R} & =
\mathrm{d} \bm{\omega} +
\bm{\omega} \bm{\omega} =
\mathrm{d} \bm{\omega} +
\frac{1}{2}
\left[
  \bm{\omega},
  \bm{\omega}
\right],
\\
\tilde{\bm{T}} & =
\mathrm{d} \bm{\varphi} +
\bm{\varphi} \bm{\varphi} +
\left[
  \bm{\omega},
  \bm{\varphi}
\right] =
\bm{T} +
\frac{1}{2}
\left[
  \bm{\varphi},
  \bm{\varphi}
\right],
\end{align}
where $\bm{T} = \mathrm{d} \bm{\varphi} + \left[ \bm{\omega}, \bm{\varphi} \right]$.

From the definition of covariant derivative we can write 
\begin{align}
\mathrm{D} \bm{\varpi} & =
\mathrm{d} \bm{\varpi} +
\left[
  \bm{\varpi},
  \bm{\varpi}
\right]
\\
\mathrm{D} \bm{\omega} & =
\mathrm{d} \bm{\omega} +
\left[
  \bm{\omega},
  \bm{\omega}
\right] +
\left[
  \bm{\omega},
  \bm{\varphi}
\right],
\\
\mathrm{D} \bm{\varphi} & =
\mathrm{d} \bm{\varphi} +
\left[
  \bm{\omega},
  \bm{\varphi}
\right] +
\left[
  \bm{\varphi},
  \bm{\varphi}
\right] =
\bm{T} +
\left[
  \bm{\varphi},
  \bm{\varphi}
\right].
\label{chs9}
\end{align}

Using eqs.~(\ref{chs7})--(\ref{chs9}) in~(\ref{chs5}), we get
\begin{align}
L_{\text{SSEP}}^{2+1} & =
\frac{k}{4}
\varpi^{ab}
\left(
  \mathrm{d} \varpi^{cd} +
  \frac{2}{3}
  \nwse{\varpi}{c}{e}
  \varpi^{ed}
\right)
\left\langle
  \bm{N}_{ab}
  \bm{N}_{cd}
\right\rangle +
\frac{k}{4}
\omega^{ab}
\left(
  \mathrm{d} \omega^{cd} +
  \frac{2}{3}
  \nwse{\omega}{c}{e}
  \omega^{ed}
\right)
\left\langle
  \bm{L}_{ab}
  \bm{L}_{cd}
\right\rangle +
\nonumber \\ & + k
\left(
  R^{ab}
  \omega^{c3} -
  \frac{2}{3}
  \omega^{a3}
  \omega^{b3}
  \omega^{c3}
\right)
\left\langle
  \bm{L}_{ab}
  \bm{L}_{c3}
\right\rangle +
k \mathrm{D} \omega^{a3}
\omega^{c3}
\left\langle
  \bm{L}_{a3}
  \bm{L}_{c3}
\right\rangle
% \nonumber \\ & -
- \mathrm{d}
\left(
  \frac{k}{2}
  \omega^{ab}
  \omega^{c3}
\left\langle
  \bm{L}_{ab}
  \bm{L}_{c3}
\right\rangle
\right).
\label{chs10}
\end{align}

Introducing the invariant tensor~(\ref{ti4})--(\ref{ti9}) in eq.~(\ref{chs10}),
we find that the CS action for the SSEP algebra, in the
$\left\{ \bm{N}_{ab}, \bm{L}_{CD} \right\}$
basis, is given by%
\footnote{Here we have absorbed $k$ in the $\alpha_{i}$ constants.}
\begin{align}
S_{\text{SSEP}}^{\left( 2+1 \right)} & =
\int_{M}
\frac{1}{2}
\alpha_{0}
\nwse{\varpi}{a}{c}
\left(
  \mathrm{d} \nwse{\varpi}{c}{a} +
  \frac{2}{3}
  \nwse{\varpi}{c}{d}
  \nwse{\varpi}{d}{a}
\right) +
\alpha_{1}
\epsilon_{abc}
\left(
  R^{ab}
  \omega^{c3} +
  \frac{1}{3}
  \omega^{a3}
  \omega^{b3}
  \omega^{c3}
\right) +
\nonumber \\ & +
\alpha_{2}
\mathrm{D} \omega^{a3}
\swne{\omega}{a}{3} +
\frac{1}{2}
\alpha_{2}
\nwse{\omega}{a}{c}
\left(
  \mathrm{d}
  \nwse{\omega}{c}{a}+
  \frac{2}{3}
  \nwse{\omega}{c}{d}
  \nwse{\omega}{d}{a}
\right) -
\mathrm{d}
\left(
  \frac{\alpha_{1}}{2}
  \epsilon_{abc}
  \omega^{ab}
  \omega^{c3}
\right).
\end{align}

Relabeling $\omega ^{a3} = e^{a}/l$, where $l$ is a length, we may write
\begin{align}
S_{\text{SSEP}}^{\left( 2+1 \right)} & =
\frac{\alpha_{0}}{2}
\int_{M}
\nwse{\varpi}{a}{c}
\left(
  \mathrm{d} \nwse{\varpi}{c}{a} +
  \frac{2}{3}
  \nwse{\varpi}{c}{d}
  \nwse{\varpi}{d}{a}
\right) +
\nonumber \\ & +
\frac{\alpha_{1}}{l}
\left[
  \int_{M}
  \epsilon_{abc}
  \left(
    R^{ab}
    e^{c} +
    \frac{1}{3l^{2}}
    e^{a}
    e^{b}
    e^{c}
  \right) -
  \frac{1}{2}
  \int_{\partial M}
  \epsilon_{abc}
  \omega^{ab}
  e^{c}
\right]
\nonumber \\ & +
\frac{\alpha _{2}}{2}
\int_{M}
\left[
  \nwse{\omega}{a}{c}
  \left(
    \mathrm{d} \nwse{\omega}{c}{a} +
    \frac{2}{3}
    \nwse{\omega}{c}{d}
    \nwse{\omega}{d}{a}
  \right) +
  \frac{2}{l^{2}}
  e_{a}
  T^{a}
\right],
\label{chs12}
\end{align}%
where we have used
$\mathrm{D} \omega^{a3} =
\mathrm{D} e^{a} / l =
T^{a} / l$.
The action in eq.~(\ref{chs12}) is probably the most general action for CS
gravity in $2+1$ dimensions.

\section{Comments}
\label{sec:final}

We have shown that:
(i)~the SSEP algebra
$\mathfrak{so} \left( D-1,1 \right) \oplus \mathfrak{so} \left( D-1,2\right)$
of Refs.~\cite{Sor04,Dup05,Sor06,Sor10} can be obtained from the AdS algebra
$\mathfrak{so} \left( D-1,2 \right)$ via the $S$-expansion procedure~\cite{Iza06b,Iza09a}
with an appropriate semigroup $S$;
(ii)~there exists a prescribed method for computing Casimir operator for $S$-expanded algebras,
which is exemplified through the SSEP algebra; and
(iii)~the above-mentioned $S$-expansion methods allowed us to obtain an invariant tensor for the
SSEP algebra, which in turn permits the construction of
the more general action for CS gravity in $2+1$ dimensions.

The interesting facts here are that the resultant theory corresponds to the sum
of the CS forms associated to the direct sum of
$\mathfrak{so} \left( D-1,1\right) \oplus \mathfrak{so} \left( D-1,2\right)$
of the Lorentz and the AdS Lie algebras.

The action~(\ref{chs12}) includes among its terms:
(i)~a term corresponding to the so-called ``exotic Lagrangian''
for the connection $\bm{\varpi}$, which is invariant under the Lorentz algebra~\cite{Zan05}; and
(ii)~the topological Mielke--Baekler action for three-dimensional gravity (for details, see Ref.~\cite{Mie91}).

\begin{acknowledgments}
P.~S.\ was supported in part by
Direcci\'{o}n de Investigaci\'{o}n, Universidad de Concepci\'{o}n through Grant \# 210.011.053-1.0
and in part by
Fondecyt through Grant \# 1080530.
Three of the authors (O.~F., N.~M. and O.~V.) were supported by grants from the
Comisi\'{o}n Nacional de Investigaci\'{o}n Cient\'{\i}fica y Tecnol\'{o}gica CONICYT and from
Universidad de Concepci\'{o}n, Chile.
J.~D.\ was supported in part by Universidad Arturo Prat.
F.~I.\ and E.~R.\ were supported by the
National Commission for Scientific and Technological Research, Chile, through Fondecyt research grants 11080200 and 11080156, respectively.
\end{acknowledgments}

\bibliographystyle{utphys}
\bibliography{biblio2012}

\end{document}